\tikzset{label distance=-1pt,
	vertex/.style={inner sep=.2em,circle,fill=black,draw},
	edge/.style={thick}}
\newcommand{\lb}{1.4521}
\newcommand{\sep}{\mathsf{sep}}
\newcommand{\pmc}{\mathsf{pmc}}
\begin{document}

\title{On the Number of Minimal Separators in Graphs}

\author{Serge Gaspers\inst{1,2}\and Simon Mackenzie\inst{1,2}
\institute{The University of New South Wales, Australia, \email{\{sergeg,simonwm\}@cse.unsw.edu.au}
 \and NICTA, Australia}
}

\maketitle

\begin{abstract}
We consider the largest number of minimal separators a graph on $n$ vertices can have at most.
\begin{itemize}
	\item We give a new proof that this number is in $O\left( \left( \frac{1+\sqrt{5}}{2}\right)^n\cdot n \right)$.
	\item We prove that this number is in $\omega\left( \lb^n \right)$, improving on the previous best lower bound of $\Omega(3^{n/3}) \subseteq \omega(1.4422^n)$.
\end{itemize}
This gives also an improved lower bound on the number of potential maximal cliques in a graph.
We would like to emphasize that our proofs are short, simple, and elementary. 
\end{abstract}

\section{Introduction}

For a graph $G=(V,E)$, and two vertices $a,b\in V$, a vertex subset $S\subseteq V\setminus \{a,b\}$ is an \emph{$(a,b)$-separator} if $a$ and $b$ are in different connected components of $G-S$, the graph obtained from $G$ by removing the vertices in $S$.
An $(a,b)$-separator is \emph{minimal} if it does not contain another $(a,b)$-separator as a subset.
A vertex subset $S\subset V$ is a \emph{minimal separator} in $G$ if it is a minimal $(a,b)$-separator for some pair of distinct vertices $a,b\in V$.

By $\sep(G)$, we denote the number of minimal separators in the graph $G$.
By $\sep(n)$, we denote the maximum number of minimal separators, taken over all graphs on $n$ vertices.

Potential maximal cliques are closely related to minimal separators, especially in the context of chordal graphs.
A graph is \emph{chordal} if every induced cycle has length 3.
A \emph{triangulation} of a graph $G$ is a chordal supergraph of $G$ obtained by adding edges.
A graph $H$ is a \emph{minimal triangulation} of $G$ if it is a triangulation of $G$ and $G$ has no other triangulation that is a subgraph of $H$.
A vertex set is a \emph{potential maximal clique} in $G$ if it is a maximal clique in at least one minimal triangulation of $G$.

By $\pmc(G)$, we denote the number of potential maximal cliques in the graph $G$.
By $\pmc(n)$, we denote the maximum number of potential maximal cliques, taken over all graphs on $n$ vertices.

Minimal separators and potential maximal cliques have been studied extensively~\cite{BerryBC00,BouchitteT01,BouchitteT02,FominKTV08,FominV12,Heggernes06,KloksK98,ParraS97,ShenL97,Villanger06}.
Upper bounds on $\sep(n)$ are used to upper bound the running time of algorithms for enumerating all minimal separators~\cite{BerryBC00,KloksK98,ShenL97}.
Bounds on both $\sep(n)$ and $\pmc(n)$ are used in analyses of algorithmic running times
for computing the treewidth and minimum fill-in of a graph~\cite{BouchitteT02,FominKTV08,FominV12}, and for
computing a maximum induced subgraph isomorphic to a graph from a family of bounded treewidth graphs~\cite{FominV10}.

\paragraph{Our results.}
Fomin et al.~\cite{FominKTV08} proved that $\sep(n) \in O\left( 1.7087^n \right)$.
Fomin and Villanger~\cite{FominV12} improved the upper bound and showed that $\sep(n) \in O\left( \rho^n \cdot n\right)$, where $\rho = \frac{1+\sqrt{5}}{2} = 1.6180\ldots$ \footnote{The bound stated in \cite{FominV12} is $O(1.6181^n)$, but this stronger bound can be derived from their proof.}.
We prove the same upper bound with simpler arguments.

As for lower bounds, it is known~\cite{FominKTV08} that $\sep(n) \in \Omega(3^{n/3})$; see \autoref{fig:melon}.
We improve on this lower bound by giving an infinite family of graphs with $\omega\left( \lb^n \right)$ minimal separators.
This answers an open question raised numerous times~(see, e.g., \cite{FominK10,FominKTV08}),
for example by Fomin and Kratsch \cite[page 100]{FominK10}, who state 
\begin{quotation}
 It is an open question, whether the number of minimal separators
 in every $n$-vertex graph is $O^*(3^{n/3})$.
\end{quotation}
Here, the $O^*$-notation is similar to the $O$-notation, but hides polynomial factors.

As a corollary, we have that there is an infinite family of graphs, all with $\omega(\lb^n)$ potential maximal cliques.
This answers another open question on lower bounds for potential maximal cliques.
For example, Fomin and Villanger \cite{FominV10} state
\begin{quotation}
There are graphs with roughly $3^{n/3}\approx 1.442^n$ potential maximal cliques \cite{FominKTV08}.
Let us remind that by the classical result of Moon and Moser \cite{MoonM65} (see also Miller and Muller \cite{MillerM60}) that the number of maximal cliques in a graph on $n$ vertices is at most $3^{n/3}$.
Can it be that the right upper bound on the number of potential maximal cliques
is also roughly $3^{n/3}$?
By Theorem 3.2, this would yield a dramatic improvement for many moderate exponential
algorithms.
\end{quotation}

\begin{figure}[tb]
	\centering
	\begin{tikzpicture}[xscale=1.1,yscale=0.8]
	\node[vertex,label=above left:$a$] (a) at (0,3) {};
	\node[vertex,label=above right:$b$] (b) at (4,3) {};
	
	\foreach \j in {1,3,4,5} {
		\foreach \i in {1,2,3} {
			\node[vertex] (v-\i-\j) at (\i,\j) {};
			\ifnum\i>1
				\pgfmathparse{int(round(\i-1))}
				\draw (v-\i-\j)--(v-\pgfmathresult-\j);
			\fi
		}
		\draw (a)--(v-1-\j);
		\draw (v-3-\j)--(b);
	}
	\node at (2,2) {$\vdots$};
	\end{tikzpicture}
	\caption{\label{fig:melon}Melon graphs have $\Omega(3^{n/3})$ minimal separators.}
\end{figure}
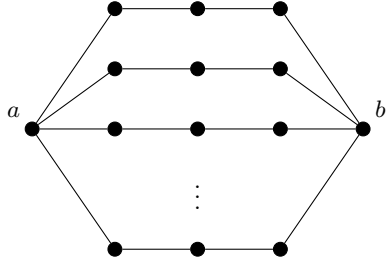

\paragraph{Preliminaries.}
We use standard graph notation from \cite{Diestel10}.
For an edge $uv$ in a graph $G$, we denote 
by $G/uv$ the graph obtained from $G$ by contracting the edge $uv$, i.e., making $u$ adjacent to $N_G(\{u,v\})$ and removing $v$.

\section{Upper bound on the number of minimal separators}

Measure and Conquer is a technique developed for the analysis of exponential time algorithms~\cite{FominGK09}.
Its main idea is to assign a cleverly chosen (sometimes, by solving mathematical programs~\cite{Eppstein06,Gaspers10,GaspersS12}) potential function to the instance -- a so-called \emph{measure} -- to track several features of an instance in solving it.
While developed in the realm of exponential-time algorithms, it has also been used to upper bound the number of extremal vertex sets in graphs (see, e.g., \cite{FominGPR08,FominGPS08}).

Our new proof upper bounding $\sep(n)$ uses a measure that takes into account the number of vertices of the graph and the difference in size between the separated components of the graph.
This simple trick allows us to avoid several complications from \cite{FominV12}, including the use of an auxiliary lemma (Lemma 3.1 in \cite{FominV12}),
fixing the size of the separators, the discussion of ``full components'',
and distinguishing between separators of size at most $n/3$ and at least $n/3$.

\begin{theorem}\label{thm:ub}
	$\sep(n) = O(\rho^n\cdot n)$, where $\rho = \frac{1+\sqrt{5}}{2} = 1.6180...$ is the golden ratio.
\end{theorem}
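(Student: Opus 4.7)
My plan is a Measure and Conquer branching argument that enumerates, for each minimal separator $S$, one of its full components. Recall that whenever $S$ is a minimal $(a,b)$-separator, the components $C_a, C_b$ of $G-S$ containing $a$ and $b$ satisfy $N(C_a)=N(C_b)=S$; such a component is called a \emph{full component}. Thus $S$ is recovered from any full component $C$ via $S=N(C)$, and at least one full component has size at most $(n-|S|)/2$. It therefore suffices to bound, by $O(\rho^n\cdot n)$, the number of connected sets $C\subseteq V$ with $|C|\leq (n-|N(C)|)/2$ that realise a full component of some minimal separator.

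I would enumerate these candidates by the following branching, run once for each starting vertex $a\in V$. Initialise $A:=\{a\}$ and $S:=\emptyset$; while there exists $v\in N(A)\setminus(A\cup S)$, branch on the two options ``$v\in A$'' and ``$v\in S$''. Only vertices already in $N(A)$ are ever placed in $S$, and enlarging $A$ can only enlarge $N(A)$, so the invariant $S\subseteq N(A)\setminus A$ is preserved throughout the recursion. Hence at a leaf, where $N(A)\setminus A\subseteq S$ also holds, we have $S=N(A)\setminus A$; moreover $A$ is connected (since every vertex added to $A$ came from $N(A)$) and is a full component of $G-S$. For any minimal separator $S^{*}$ with chosen small full component $C^{*}$ and any $a\in C^{*}$, the branching path that puts $v$ into $A$ iff $v\in C^{*}$ leads to a leaf outputting $(C^{*},S^{*})$, so the total number of leaves over all starting vertices is an upper bound on $\sep(G)$.

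For the analysis I would use the measure $\mu:=n-2|A|-|S|$. It starts at $n-2$, drops by $2$ when $v$ joins $A$, and drops by $1$ when $v$ joins $S$. Any leaf corresponding to a small full component satisfies $2|A|+|S|\leq n$, i.e.\ $\mu\geq 0$, so branches driving $\mu$ below $0$ may be pruned without missing valid outputs. The leaf-count recurrence $L(\mu)\leq L(\mu-1)+L(\mu-2)$ yields $L(\mu)=O(\rho^{\mu})$ with $\rho=(1+\sqrt{5})/2$, so each starting vertex produces $O(\rho^{n})$ leaves and summing over $a\in V$ gives $\sep(G)=O(\rho^{n}\cdot n)$.

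The step I expect to require the most care is the correctness of the enumeration, namely verifying that the invariant $S\subseteq N(A)\setminus A$ truly persists across both branches and that the stopping condition forces equality $S=N(A)\setminus A$ rather than mere containment, so that every minimal separator is indeed hit by some leaf. Once this bookkeeping is done, the Measure and Conquer part is a one-line Fibonacci recurrence, and the factor of $n$ arises only from the outer loop over starting vertices $a\in V$.
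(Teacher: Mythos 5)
Your proposal is correct and is essentially the paper's proof in algorithmic clothing: you branch a vertex of $N(A)\setminus(A\cup S)$ into $S$ (measure drops by $1$) or into $A$ (drops by $2$), your measure $n-2|A|-|S|=|B|-|A|$ coincides up to an additive constant with the paper's $\mu(G,d)=|V|-d$ after its contractions/deletions, the same Fibonacci recurrence gives $\rho^{\mu}$, and the factor $n$ comes in both cases from choosing the start vertex $a$ in the smaller side. The only presentational difference is that you certify completeness via full components, whereas the paper packages the same balance condition into its definition of $[a,d]$-separations and thereby avoids discussing full components explicitly.
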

\begin{proof}
	Let $G=(V,E)$ be any graph on $n$ vertices with $a\in V$.
	For $d\le |V|$, an $[a,d]$-separation is 
	a partition $(A,S,B)$ of $V$ such that 
	\begin{itemize}
		\item $a\in A$,
		\item $G[A]$ is connected,
		\item $S$ is a minimal $(a,b)$-separator for some $b\in B$, and
		\item $|A|\le |B|-d$.
	\end{itemize}
	Let $\sep_{a}(G,d)$ denote the number of $[a,d]$-separations in $G$.
	By symmetry, \linebreak[4]$\sep_{a}(G,0)$ upper bounds
	the number of minimal separators in $G$ up to a factor $O(|V|)$.
	To upper bound $\sep_{a}(G,d)$, we will use the measure
	\begin{align*}
	\mu(G,d) = |V| - d.
	\end{align*}
	The theorem will follow from the claim that $\sep_{a}(G,d) \le \rho^{\mu(G,d)}$ for $0\le d\le |V|$.
	
	If $\mu(G,d)=0$, then $d = |V|$ and $\sep_{a}(G,d)=0$ since there is no $A\subseteq V$ with $|A|\le 0$ and $a\in A$.
	If $d_G(a)=0$, then there is at most one $[a,d]$-separation, which is $(\{a\},\emptyset,V\setminus \{a\})$.
	Therefore, assume $\mu(G,d)\ge 1$, $a$ has at least one neighbor, and assume the claim holds for smaller measures.
	Consider a vertex $u\in N(a)$.
	For every $[a,d]$-separation $(A,S,B)$, either $u\in S$ or $u\in A$.
	Therefore, we can upper bound the $[a,d]$-separations $(A,S,B)$ counted in $\sep_{a,b}(G,d)$ with $u\in S$ by $\rho^{\mu(G-\{u\},d)} = \rho^{\mu(G,d)-1}$, and those with $u \notin S$ by $\rho^{\mu(G/au,d+1)} = \rho^{\mu(G,d)-2}$.
	It remains to observe that $\rho^{\mu(G,d)-1}+\rho^{\mu(G,d)-2} = \rho^{\mu(G,d)}$.
	\qed
\end{proof}

\section{Lower bound on the maximum number of minimal separators}

In the melon graph in \autoref{fig:melon},  each horizontal layer implies a choice between 3 vertices. Each of those choices also `costs' 3 vertices. The new construction improves the bound by adding a vertical choice on top of the horizontal choice. This is achieved by `sacrificing' one of the horizontal choices. This allows us to chose which layer to sacrifice, at the cost of 6 vertices. If it has more than 
$3\cdot 3=9$ layers, then this will give a larger range of choices than if we hadn't eliminated that layer.
\begin{theorem}\label{thm:lb}
	$\sep(n) \in \omega(\lb^n)$.
\end{theorem}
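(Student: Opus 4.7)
Following the sketch in the paragraph before the theorem, the plan is to design a single block graph $B_\ell$ with two distinguished terminals $a,b$ on $3\ell+6$ vertices that has at least $\ell\cdot 3^\ell$ minimal $(a,b)$-separators, and then to combine $k$ copies of $B_\ell$ in parallel between common terminals to obtain an infinite family of $n$-vertex graphs realising the claimed rate.

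The block $B_\ell$ should consist of $\ell$ parallel three-vertex paths from $a$ to $b$ (the ``horizontal'' layers of a melon, as in \autoref{fig:melon}) together with a six-vertex ``selector'' gadget attached to the layers. The selector is to be designed so that every minimal $(a,b)$-separator of $B_\ell$ is determined by (i) a choice of one of the three middle vertices in each layer, giving $3^\ell$ horizontal options, and (ii) a choice of one of the $\ell$ layers as ``distinguished'' by the selector, giving a vertical factor $\ell$; conversely, every such combination must yield a distinct minimal $(a,b)$-separator. This would give exactly $\ell\cdot 3^\ell$ minimal $(a,b)$-separators in $B_\ell$.

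Next, let $G_k$ be the graph obtained by taking $k$ disjoint copies of $B_\ell$ and identifying all $a$-terminals into a single vertex and all $b$-terminals into a single vertex. Since every $(a,b)$-path lies inside one copy and the copies share only the two terminals, a routine cut argument shows that minimal $(a,b)$-separators of $G_k$ are precisely the unions of one minimal $(a,b)$-separator per copy; in particular, $G_k$ has at least $(\ell\cdot 3^\ell)^k$ minimal separators. With $n=k(3\ell+6)+2$, the per-vertex rate approaches $(\ell\cdot 3^\ell)^{1/(3\ell+6)}$; for $\ell=12$ this evaluates to a quantity strictly greater than $\lb$, so $G_k$ has $\omega(\lb^n)$ minimal separators as $k\to\infty$. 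The threshold $\ell>3\cdot 3=9$ from the sketch is precisely the break-even point $\ell\cdot 3^\ell > 3^{\ell+2}$, i.e., where spending the six extra vertices on the selector beats simply adding two more plain melon layers.

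\textbf{Main obstacle.} The difficult step is the explicit construction of the six-vertex selector and the proof that it induces exactly the intended $\ell\cdot 3^\ell$ minimal $(a,b)$-separators. For each parametrised set one must verify (a) that it is an $(a,b)$-separator, (b) that no proper subset is, i.e., minimality, and (c) that no unintended minimal $(a,b)$-separator arises with a different structure that could disrupt the count. Once $B_\ell$ is understood, the combining step is a standard Cartesian-product argument for parallel gadgets glued at two common vertices, and the passage from $\Omega(\lb^n)$ to $\omega(\lb^n)$ is automatic from the strict inequality $(\ell\cdot 3^\ell)^{1/(3\ell+6)}>\lb$ at $\ell=12$.
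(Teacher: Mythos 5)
Your write-up is a plan rather than a proof: the step you yourself flag as the main obstacle --- exhibiting the six-vertex ``selector'' and verifying that the block really has $\ell\cdot 3^\ell$ minimal $(a,b)$-separators, no more and no fewer --- is exactly the mathematical content of the theorem, and it is left unconstructed. Without it, nothing is established beyond the old $\Omega(3^{n/3})$ bound. The gap is not a routine verification: it is unclear that any gadget of six vertices \emph{attached to} plain three-vertex melon layers can do what you want, because in a plain melon every layer is itself an $a$--$b$ path, so every $(a,b)$-separator must meet every layer, and the intended ``vertical'' degree of freedom (which layer is distinguished) has no obvious carrier; you would have to design the attachment so that a factor of $\ell$ appears in the count while minimality still forces exactly one vertex per layer and no unintended separators arise. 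These are the points (a)--(c) you defer, and they are where all the work lies.

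For comparison, the paper realises the vertical choice without any external gadget, by restructuring the layers themselves: each layer $j$ consists of six vertices forming two half-paths $(a,v_{1,j},v_{2,j},v_{3,j})$ and $(v_{4,j},v_{5,j},v_{6,j},b)$, and $v_{3,j}$ is joined to $v_{4,k}$ exactly when $k\ne j$ (see \autoref{fig:lb}). A minimal $(a,b)$-separator that avoids all six vertices of one chosen layer $j$ must then contain one vertex from the left half and one from the right half of every other layer, and by minimality nothing else; distinct choices of $j$ give disjoint families, so with $|J|$ layers one gets at least $|J|\cdot 3^{2(|J|-1)}$ minimal separators on $6|J|+2$ vertices, which for $|J|=24$ exceeds $\lb$ per vertex. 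Here the ``cost of 6 vertices'' in the sketch is the sacrificed layer itself (each restructured layer is worth a factor $9$), not a separate selector, and the threshold ``more than $9$ layers'' is $|J|>9$. Your parallel-composition step (unions of one minimal separator per copy, glued at $a$ and $b$) and the passage from a rate strictly above $\lb$ to $\omega(\lb^n)$ do match the paper and are fine; but as it stands the central construction is missing, so the proposal does not prove the theorem.
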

\begin{proof}
	We prove the theorem by exhibiting a family of graphs $\{G_1,G_2,\dots\}$ and lower bounding
	their number of minimal separators.
	
	Let $I = \{1,\dots,6\}$ and $J = \{1,\dots,24\}$.
	The graph $G_1$ is constructed as follows (see \autoref{fig:lb}).
	It has vertex set $V = \{a,b\} \cup \{v_{i,j} : i\in I, j\in J\}$. We denote by $V_i$ the vertex set $\{v_{i,j} : j\in J\}$.
	The edge set $E$ of $G_1$ is
	obtained by first adding the paths $(a,v_{1,j},v_{2,j},v_{3,j})$ and $(v_{4,j},v_{5,j},v_{6,j},b)$ for all $j\in J$, and then adding the edges $\{v_{3,j}v_{4,k} : j,k\in J \text{ and } j\ne k\}$.
	The graph $G_\ell$, $\ell\ge 2$, is obtained from $\ell$ disjoint copies of $G_1$, merging the copies of $a$, and merging the copies of $b$.
	
	Let us now lower bound the minimal $(a,b)$-separators $\mathcal{S}_j$ in $G_1$
	that do not contain any vertex from $\{v_{1,j},v_{2,j},v_{3,j},v_{4,j},v_{5,j},v_{6,j}\}$ for some $j\in J$.
	Each such separator contains a vertex from $\{v_{1,k}, v_{2,k},v_{3,k}\}$,
	for $k\in K\setminus \{j\}$, since $(a,v_{1,k}, v_{2,k},v_{3,k},v_{4,j},v_{5,j},v_{6,j},b)$ is a path in $G_1$,
	and it contains a vertex from $\{v_{4,k}, v_{5,k},v_{6,k}\}$,
	for $k\in K\setminus \{j\}$, since $(a,v_{1,j}, v_{2,j},v_{3,j},v_{4,k},v_{5,k},v_{6,k},b)$ is a path in $G_1$.
	Due to minimality, the separators in $\mathcal{S}_j$ contain no other vertices.
	Thus, we have that $|\mathcal{S}_j| = 3^{2\cdot(|J|-1)}$.
	We also note that $\mathcal{S}_j \cap \mathcal{S}_k = \emptyset$ if $j\ne k$.
	Therefore, the number of minimal separators of $G_1$ is at least%
	\footnote{There are also minimal $(a,b)$-separators that are completely contained in $V_1\cup V_2\cup V_3$ or $V_4\cup V_5\cup V_6$, but their number does not affect our bound in the first 10 decimal digits in the base of the exponent.}
	$|J|\cdot 3^{2\cdot (|J|-1)} > 2.1271\cdot 10^{23}$.
	
	Minimal $(a,b)$-separators for $G_\ell$ are obtained by taking the union of minimal separators for the different copies of $G_1$.
	Their number is therefore at least $(|J|\cdot 3^{2\cdot (|J|-1)})^\ell = (|J|\cdot 3^{2\cdot (|J|-1)})^{\frac{n-2}{6\cdot |J|}} \in \omega(\lb^n)$, where $n=\ell\cdot 6\cdot |J|+2$ is the number of vertices of $G_\ell$.
	\qed
\end{proof}

\begin{figure}[tb]
	\centering
	\begin{tikzpicture}[xscale=1.1,yscale=0.8]
	\pgfmathtruncatemacro\dimi{6}
	
	\node[vertex,label=above left:$a$] (a) at (0,3) {};
	\node[vertex,label=above right:$b$] (b) at (7,3) {};
	
	\foreach \j in {1,3,4,5} {
		\foreach \i in {1,2,...,\dimi} {
			\node[vertex] (v-\i-\j) at (\i,\j) {};
		}
		\draw (a)--(v-1-\j);
		\draw (v-6-\j)--(b);
		\foreach \i in {1,2,4,5} {
			\pgfmathparse{int(round(\i+1))}
			\draw (v-\i-\j)--(v-\pgfmathresult-\j);
		}
	}
	\foreach \j in {1,3,4,5} {
		\foreach \k in {1,3,4,5} {
			\ifnum\j=\k
			\else
				\draw (v-3-\j)--(v-4-\k);
			\fi
		}
	}
	\node at (2.5,2) {$\vdots$};
	\node at (4.5,2) {$\vdots$};
	
	\end{tikzpicture}
	\caption{\label{fig:lb}The graph $G_1$ has 24 horizontal layers; only 4 are depicted.}
\end{figure}
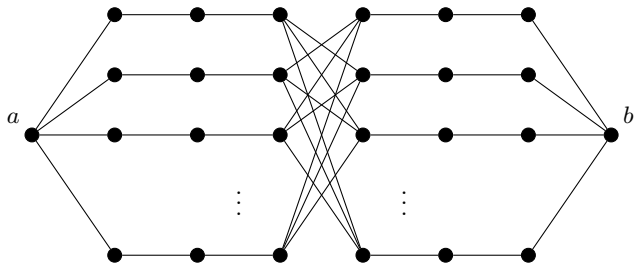

Based on results from~\cite{BouchitteT01}, Bouchitt{\'e} and Todinca~\cite{BouchitteT02} observed
that the number of potential maximal cliques in a graph is at least the number of
minimal separators divided by the number of vertices $n$.
Therefore, we arrive at the following corollary of Theorem~\ref{thm:lb}.

\begin{corollary}\label{cor:pmcs}
	$\pmc(n)\in \omega(\lb^n)$.
\end{corollary}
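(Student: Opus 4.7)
The plan is to combine the Bouchitt\'e--Todinca inequality $\pmc(G)\ge \sep(G)/|V(G)|$ stated in the paragraph preceding the corollary with the explicit lower bound on $\sep(G_\ell)$ produced inside the proof of Theorem~\ref{thm:lb}, and then check that dividing by the polynomial factor $n$ does not destroy the $\omega(\lb^n)$ growth rate.

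Concretely, I would first apply the Bouchitt\'e--Todinca inequality directly to the family $\{G_\ell\}_{\ell\ge 1}$ constructed in the proof of Theorem~\ref{thm:lb}, obtaining $\pmc(G_\ell)\ge \sep(G_\ell)/n_\ell$ where $n_\ell = 6\cdot|J|\cdot \ell + 2 = 144\ell+2$. Next, I would plug in the explicit bound $\sep(G_\ell)\ge (|J|\cdot 3^{2(|J|-1)})^\ell = (24\cdot 3^{46})^\ell$ established in Theorem~\ref{thm:lb}. Writing $\alpha := (24\cdot 3^{46})^{1/144}$, this yields $\pmc(G_\ell)\ge \alpha^{n_\ell-2}/n_\ell$. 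Since $\lb = 1.4521$ was chosen as a strict lower truncation of $\alpha$, the ratio $\alpha/\lb$ is a constant strictly greater than $1$, so $\alpha^{n-2}/(n\cdot \lb^n) = \alpha^{-2}\cdot(\alpha/\lb)^n/n \to \infty$, which is exactly the $\omega(\lb^n)$ claim along the sub-sequence $n=n_\ell$. To finish, a trivial padding argument (take $G_\ell$ with $\ell = \lfloor (n-2)/144 \rfloor$ and adjoin isolated vertices until the vertex count reaches $n$; isolated vertices only add the empty set as a potential new minimal separator / potential maximal clique and cannot decrease either count) transfers the bound to every sufficiently large $n$.

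The only possible ``obstacle'' is cosmetic: the division by $n$ in the Bouchitt\'e--Todinca inequality looks like it might weaken a tight $\lb^n$-type bound, but the slack between the true exponential base $\alpha$ attained in Theorem~\ref{thm:lb} and the truncated constant $\lb$ absorbs any polynomial factor. No new graph construction or combinatorial argument is needed beyond what already appears in the proof of Theorem~\ref{thm:lb} and the cited inequality of Bouchitt\'e and Todinca.
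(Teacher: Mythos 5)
Your proposal is correct and follows the same route as the paper: combine the Bouchitt\'e--Todinca bound $\pmc(G)\ge \sep(G)/|V(G)|$ with the explicit count $(|J|\cdot 3^{2(|J|-1)})^\ell$ for the family $G_\ell$ from Theorem~\ref{thm:lb}, observing that the strict slack between the true base $(24\cdot 3^{46})^{1/144}$ and $\lb$ absorbs the division by $n$. The extra padding remark is a harmless refinement of what the paper leaves implicit.
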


\section{Conclusion}

We have given a simpler proof for the best known asymptotic upper bound on $\sep(n)$,
and we have improved the best known lower bound from $\Omega(3^{n/3})$ to $\omega(\lb^n)$, thereby reducing the gap between the current best lower and upper bound.
Before our work, it seemed reasonable to believe that $\sep(n)$ could be asymptotically equal to the best known lower bound.
We showed that this is not the case, and we believe there is room to further improve the lower bound.

\section*{Acknowledgments}
NICTA is funded by the Australian Government through the Department of Communications and the Australian Research Council through the ICT Centre of Excellence Program.
Serge Gaspers is the recipient of an Australian Research Council Discovery Early Career Researcher Award (project number DE120101761) and a Future Fellowship (project number FT140100048).

\bibliographystyle{plain}
\bibliography{literature}
\end{document}